\documentclass[preprint,authoryear,12pt]{article}

\usepackage{newlfont}
\usepackage{a4wide}
\usepackage{amsmath}
\usepackage{inputenc}
\usepackage{amsfonts}
\usepackage{amssymb}
\usepackage{amsthm}
\usepackage{newlfont}
\usepackage{graphicx}
\usepackage{subfigure}
\usepackage{mathrsfs}
\usepackage[authoryear]{natbib}
\usepackage{color}
\usepackage[normalem]{ulem}
\usepackage{comment}
\newtheorem{theorem}{Theorem}
\newtheorem{lemma}{Lemma}

\newtheorem{definition}{Definition\rm}
\newtheorem{remark}{Remark}

\theoremstyle{definition}

\setcounter{equation}{0}
\newcommand{\PP}{{\mathcal P}}
\renewcommand{\AA}{{\mathcal A}}
\newcommand{\MM}{{\mathcal M}}
\newcommand{\BB}{{\mathfrak B}}

\newcommand{\IR}{{\mathrm{I\!R}}}

\newcommand{\FF}{{\mathcal F}}

\newcommand{\EE}{{\mathcal E}}

\newcommand{\IN}{{\mathrm{I\!N}}}

\newcommand{\HH}{{\mathcal H}}

\newcommand{\TT}{{\mathcal T}}

\renewcommand{\SS}{{\mathscr S}}
\newcommand{\CC}{{\mathscr C}}

\inputencoding{ansinew}
\setlength{\parindent}{0pt}

\begin{document}

\title{An interim core for normal form games and exchange economies with incomplete information:  a correction.\footnote{This is an amended version of the original paper with the same title published in : \emph{Journal of Mathematical Economics 58(2015); 38-45.}}}

\author{Y. Askoura}

\maketitle

{LEMMA, Universit\'e Panth\'eon-Assas, Paris I\!I,\\
4 rue Blaise Desgoffe, 75006 Paris. email~: youcef.askoura@u-paris2.fr}

\begin{abstract} We consider the \emph{interim} core of normal form cooperative games and exchange economies with incomplete information based on the partition model. We develop a solution concept that we can situate roughly between Wilson's coarse core and Yannelis's private core. We investigate the \emph{interim} negotiation of contracts and address the two situations of contract delivery~: \emph{interim} and \emph{ex post}. Our solution differs from Wilson's concept because the measurability of strategies in our solution is postponed until the consumption date (assumed with respect to the information that will be known by the players at the consumption date). For \emph{interim} consumption, our concept differs from Yannelis's private core because players can negotiate conditional on proper common knowledge events in our solution, which strengthens the \emph{interim} aspect of the game, as we will illustrate with examples.  
\end{abstract}

\textbf{keywords :}
coarse core; fine core; private core; $\alpha$-core{; weak-core}; incomplete information; exchange economies; partition model.

JEL Classification codes : C02; C71.

\section{Introduction}  
We define and investigate an \emph{interim} core concept for normal form games with incomplete information. We focus on {the weak-core \citep{WEB81}, which is a slight modification of} the $\alpha$-core of normal form games \citep{AUM61,SCA71,KAJ92} and the core of exchange economies as initiated by \citet{RAD68}. The incomplete information aspect is modeled using Wilson's \citep{WIL78} partition model. Specifically, we associate to each player a $\sigma-$field representing the events that the particular player can discern.

\bigskip
\citet{WIL78} defined two concepts of the core that targeted two extreme situations: in the \emph{coarse core} concept, agents are not permitted to share their information, whereas in the \emph{fine core} concept, agents share all their information within coalitions. 
\citet{WIL78} obtained the non-emptiness of the coarse core by endowing the grand coalition with the particular power of sharing all its agents' information, while proper coalitions do not share any information and negotiate only over ``common knowledge'' events. The incoherence of this particular construction has been criticized in the literature, which has spawned alternative models.   

\bigskip
Several interesting studies have been undertaken based on Wilson's solution concepts. \citet{ALL96} provided an overview of the basic literature concerning exchange economies and cooperative games. \citet{YAN91} defined the private core, in which agents do not share their private information. Part of the incomplete information aspect of the private core and other types of models is represented by assuming that agents can only envision strategies that are measurable with respect to the $\sigma$-fields of events that they can discern, which amounts to assuming that contract delivery is situated at the \emph{interim} stage. \citet{KoY93} investigated the incentive compatibility of core concepts under incomplete information. The incentive compatibility problem arises when the prevailing state of nature is not publicly known prior to consumption or contract delivery. The investigation of the private core was further undertaken by \citet{GMY01} and \citet{AlY01}. \citet{PAG97} formulated a common treatment and proved the non-emptiness of the core for a unified model that can be reduced to some version of Wilson's coarse and fine cores\footnote{Indeed, \citet{PAG97} used \emph{ex ante} utilities, whereas in each of Wilson's coarse and fine cores, agents evaluate allocations based on possibly proper events that are smaller than the entire universe.} and Yannelis's private core as a function of information-sharing rule. \citet{SER01} extends the private core to dynamic economies and defines a non-myopic core. Additional information  on this subject, incentive problems and a cogent mathematical formulation of the problem of information sharing is provided in \citet{ALL06}. For a review of the alternative approach (in incomplete information games), Harsanyi's model \citep{HAR67}, we refer to the survey of \citet{FMV02}.

\bigskip
In this paper, we formulate an \emph{interim} core concept in which players negotiate with \emph{interim} utilities (conditional expectations) and can object to a status-quo allocation (strategy) for common knowledge events. The main difference between our concept and Wilson's concepts is that the measurability of strategies is not related to the information available at the negotiation date but to the information that will be known by the players at the consumption date. This modelling idea is more natural and lends consistency to the resulting concepts. Moreover, it solves the incoherence problem in Wilson's construction. From this perspective, players faces all circumstances that will be known just before contract delivery. In other words, they can envision strategies based on the information that will be received before consumption, and are not restricted to using strategies conceivable under the information known during the negotiation stage. Negotiation is conducted using the available information, but strategies can be based on all information that will be known-even future information-before consumption.  

If consumption is situated at the \emph{interim} stage (no additional information between negotiation and consumption), then our concept is close to Yannelis's private core, except that the negotiation is conducted on proper common knowledge events. Compared with the private core\footnote{This comment may be put in some perspective by the version of the private core defined in \citet{HaY01} and in \citet{CPY11} as a weak \emph{interim} private core. See point 2 in Section \ref{CPC} for a discussion of this concept~: in particular, it may fail to be nonempty under usual assumptions.}, this strengthens the \emph{interim} aspect of the model, as we will show with examples in Section \ref{CPC}. We address 
two situations, \emph{i.e.,} whether the prevailing state of nature is revealed at the consumption date or whether it is not. The weak-core of a normal form game and the core of an exchange economy are addressed in each case. Note that \citet{AST13} and \citet{MUT14} provide an \emph{ex ante} formulation of the $\alpha-$core, under Harsanyi's model.

\bigskip
This paper is divided into two main parts~: Section 2 is devoted to \emph{ex post} contract delivery and contains most of the notation and technical tools that we use. Section 3 is devoted to \emph{interim} contract delivery and concludes with a subsection comparing our concept to the private core. Section 4 contains a closing comment on information sharing. 

\section{\emph{Ex post} contract delivery}\label{EPCD}

In an incomplete information model, two important aspects must be considered~: (a) at what stage are negotiations made, \emph{ex ante} or \emph{interim}?\footnote{The \emph{ex post} situation corresponds to complete information.} and (b) at what stage is contract delivery made, \emph{ex ante, interim} or \emph{ex post}? For (a), this paper discusses the \emph{interim case}. For (b), two consumption dates, \emph{interim} and \emph{ex post}, are addressed. The \emph{interim} consumption is technically accounted for by the assumption of the measurability of players' strategies, which can be roughly addressed by acknowledging that players can only envision strategies that are measurable with respect to their respective $\sigma$-fields\footnote{Alternatively, new $\sigma-$fields are generally defined by the information sharing rule.}. Then, this assumption is related to conditions under which the game takes place. For instance, it is unnecessary if the prevailing state (or,  generally, a finer $\sigma-$field than all players' information $\sigma$-fields) is revealed before contract delivery, and it is necessary otherwise. In other words, if contract delivery is made at the \emph{ex post} stage,  the measurability condition can be relaxed, such as is the case in \citet{VOL00}, in which the agents' utilities are updated again following the information that agents possess at the negotiation step. \citet{VOL00} obtained an intermediate core concept between the coarse and the fine core of Wilson. In \citet{AKI12}, a new type of core (informational core) was introduced in which the measurability assumption is no longer required. In \citet{KOB80}, measurability was assumed with respect to a $\sigma$-field that will be revealed before contract delivery and a conditional core was introduced. In this section,  consumption is organized \emph{ex-post}.

\bigskip
\subsection{Conceptual aspects-Normal form games}\label{CANFG}
Let $(\Omega,\FF,\mu)$ be a probability space. $\Omega$ represents the set of states of nature. The probability $\mu$ is a common objective prior. In the sequel, two events with a $\mu$-null symmetric difference will be confused. $N=\{1,...,n\}$ is the set of players. If $S\subset N$ is a coalition, then we denote by $-S$ the coalition of the remaining players. The action space of player $i$ is denoted $A_i$. It is assumed to be a compact convex subset of a (separable) reflexive Banach space $X_i$. 
The $\sigma-$algebra $\BB(A_i)$ stands for the Borel $\sigma-$field of $A_i$.

Set $A=\prod_{i\in N}A_i$. The information of each player $i$ is represented by a sub-$\sigma$-algebra $\FF_i$ of $\FF$. The elements of $\FF_i$ represent the events that player $i$ can discern. In other words, for every $E\in \FF_i$, player $i$ knows whether the prevailing
state is in $E$ or in its complement, which is denoted $\complement E$. \citet{WIL78} defined the fine core to be the core concept corresponding to the situation in which agents within a coalition $S$ pool all their information. Then, they can discern all events in the coarsest sub-$\sigma$-algebra generated by $\underset{i\in S}{\cup}\FF_i$, which is denoted $\underset{i\in S}\vee \FF_i$. Analogously, Wilson's coarse core is that in which agents do not reveal their information within coalitions or use common knowledge events or events contained therein for a coalition $S$ in the field $\underset{i\in S}{\cap}\FF_i$, which is denoted $\underset{i\in S}{\wedge}\FF_i$.

\begin{itemize}
\item[\textbf{R1)}] Assume that there is a finite partition of $\Omega$ generating the $\sigma$-algebra $\underset{i\in N}{\vee}\FF_i$; then, every sub-$\sigma$-field $\FF_i$ is generated by a partition of $\Omega$. Denote by $\PP$ the coarsest partition generating the field $\underset{i\in N}{\vee}\FF_i$. Denote by $\PP_i$ the coarsest partition of $\Omega$ generating $\FF_i$. We can assume that $\mu(K)>0$ for all $K\in \PP_i$. The elements of $\PP_i$ are the finest events that can be discerned by player $i$ without sharing information with other players. Observe that for every $K\in \PP$, there is a unique $K_i\in \PP_i$ such that $K\subset K_i$.
\end{itemize}

Each agent $i\in N$ is assumed to know, before the negotiation date, the smallest (finest) event (an element of $\PP_i$) in his field $\FF_i$ containing the realized state of nature\footnote{This is only true in the case in which $\FF_i$ is generated by a partition of $\Omega$, as assumed above; otherwise, at this stage, player $i$ can only know whether the realized state is in $E$ or in $\complement E$, for all $E\in \FF_i$.}.

For each player $i$, associate a utility function $$u_i:A\times\Omega\rightarrow \IR_+$$ 

In the sequel, we assume that 
\begin{itemize}
	\item the information fields $\FF_i,i\in I,$ and the other components of the game are publicly known. 
\end{itemize}
For a player, knowing the whole structure of the information, that is, all the fields $\FF_i,i\in N,$ does not provide him with more information if he is only assumed to discern, in his own field, whether any event contains the prevailing state.

\bigskip
Player $i$'s strategy\footnote{Allocation in the case of exchange economies.}, $x_i$, is a function from $\Omega$ into his space of actions\footnote{Consumption set in the case of an exchange economy.}. The measurability condition requires that $x_i$ be measurable with respect to the $\sigma-$field containing the events that player $i$ can discern. For instance, this function must be $\FF_i$-measurable in the absence of  information sharing or, for example, $\underset{j\in S}{\vee}\FF_j$-measurable if $i$ belongs to the coalition $S$ and $S$ shares all its members' information. We consider the measurability condition by distinguishing two situations~:

\begin{itemize}
\item If the prevailing state of nature is not known before consumption takes place (contract delivery), then it is reasonable to assume the measurability condition. Without information sharing, this condition means that player $i$ plays a constant function in every set $K_i\in \PP_i$. As $i$ can only distinguish an event $K_i\in \PP_i$ containing the prevailing state and the consumption takes place at this step, \emph{i.e.}  before the exact prevailing state is revealed, $i$ plays a unique action, regardless of the exact realized $w\in K_i$, by seeking to maximize his mathematical expectation computed on $K_i$.
\item If the prevailing state of nature is known before consumption occurs, we can withdraw this measurability condition. Indeed, if each player knows the prevailing state at the consumption date, then he plays the action corresponding to the image of his strategy at this realized state\footnote{\label{WIFASS}Assuming that the whole information field $\FF$ allows precise knowledge of the exact realized state of nature.}.  
\end{itemize}

In this setting, the measurability condition is intimately related to the incentive compatibility problem. In other words, these two conditions are to be assumed or ignored simultaneously following the situation we consider.

In this section, we consider the situation in which 
\begin{itemize}
	\item[\textbf{C1)}] the realized state is known before consumption takes place$^8$. 
\end{itemize}

Thus, we do not require the incentive compatibility of solutions nor the measurability condition. 
\begin{itemize}
	\item[\textbf{C2)}] All strategies are assumed to be $\FF$-measurable.  
\end{itemize}

\begin{remark}If only the information carried by a $\sigma$-field $\HH$ such that $\FF\supset \HH\supset $ $\underset{i\in N}\vee \FF_i$ is revealed (the realized events of $\HH$) before consumption, then it is reasonable to assume the measurability of all strategies with respect to $\HH$. However, in this case, it is mathematically possible to modify the data of the game to withdraw this measurability condition (with respect to $\HH$ instead of $\FF$) by considering the game issued from the probability space $(\Omega,\HH,\mu_{|\HH})$.
\end{remark}

The game takes place as follows~:
At a given date, players receive their own private information that makes them able to recognize whether any set (only) in their own field contains the prevailing state. Following this date, players negotiate. They can form coalitions\footnote{The information-sharing role must be clarified in the negotiation process. Precisely, a player can either reveal his information to his coalition or not.} at this stage. At the end of this step, the true realized state is revealed. Thereafter, consumption (or contract delivery) takes place. 

\bigskip
The set of strategies $\AA_S$ of a coalition $S$ is an $|S|$-tuple of $\FF$-measurable functions $x_i:\Omega\rightarrow A_i,i\in S$. The set of strategies of the grand coalition $\AA_N$ is sometimes denoted simply by $\AA$. For each coalition $S\subseteq N$, $\AA_S$ is considered as a subset of $\underset{i\in S}{\prod} L_1(\Omega,\mu,X_i)$ that is endowed with the product of the weak topologies $\sigma(L_1(\mu,X_i),L_\infty(\mu,X^*_i))$. The $L_1$ spaces here stand for the spaces of Bochner integrable functions. The separability of the sets $A_i$ makes the strong measurability identical to Borel measurability (the image spaces $A_i$ are endowed with the Borel $\sigma$-field). For this reason we confuse these notions. Note that the sets $\AA_S$ are weakly compact \citep{DRS93,ULG91}.

\medskip
Associate to every player $i$ the utility $U_i$ defined on $\AA$ by $$U_i(x)_w=u_i(x(w),w)$$

Assume that

\begin{itemize}
\item for every $i$, $u_i$ is $\BB(A)\otimes \FF$-measurable and $\sup_{a\in A} |u_i(a,\cdot)|$ is integrably bounded.
\end{itemize}
\bigskip
Denote the described game by $$G=((\Omega,\FF,\mu),N,A_i,u_i,\FF_i)$$

In this paper, until section \ref{ISH}, we assume that 
\begin{itemize}
	\item [\textbf{C3)}] Information is not shared within coalitions. 
\end{itemize}

Then, every player $i$ must evaluate his expected payoff with respect to the conditional expectation defined for $x\in \AA$ by~: $$E(U_i(x)|\FF_i):\Omega \rightarrow \IR$$
Indeed, player $i$ can only know whether a particular event of $\FF_i$ may contain the prevailing state at the decision (negotiation) step.

\begin{definition}\label{ICOA} A coalition $S$ blocks a given strategy $x\in \AA$, if there exists a common-knowledge event $F\in \underset{i\in S}{\wedge}\FF_i$, {and if the exists $\varepsilon>0$} and if $S$ possesses an admissible strategy $y_S\in \AA_S$, such that for all $y_{-S}\in \AA_{-S}$,
$$ E(U_i(y_S,y_{-S})|\FF_i)_w>E(U_i(x)|\FF_i)_w{+\varepsilon},\text{ for a.e. } w\in F, \forall i\in S$$
In such a circumstance, we also posit that $S$ blocks $x$ on $F$.

The \emph{interim} core of $G$ is the set of non-blocked strategies.
\end{definition}

\begin{remark}
The concept defined above reduces to {an} \emph{ex ante} core if all of the information fields are trivial $\FF_i=\{\emptyset,\Omega\}$. Indeed, players evaluate strategies with respect to the mathematical expectation $E(U_i(x))=\int_\Omega u_i(x(w),w) d\mu$.
\end{remark}

{The previous definition is a generalization of the Weber's \citep{WEB81} weak-core which can be adapted for games with a finite set of players as follows. 
\begin{definition} \label{WC} Consider a game $(N,Y_i,h_i,{i\in N})$ where $N$ is a finite set of players, $Y_i$ the set of strategies of $i\in N$, and $h_i :Y=\prod_{i\in N}Y_i\rightarrow \IR$ the utility of $i\in N$. 
A coalition $S\subset N$ is said to block $y\in Y$ iff it possesses a strategy $x_S\in Y_S$, and there exists $\varepsilon>0$ such that, $$h_i(x_S,x_{-S})>h_i(y)+\varepsilon, \forall x_{-S}\in X_{-S},\forall i\in S.$$
The weak-core is the set of non blocked strategies.  
\end{definition}
By removing $\varepsilon$ in the previous definition, we obtain, the $\alpha$-core of \citet{AUM61}.
}
The non-emptiness of the \emph{interim} core for normal form games is shown in the following~:

\begin{theorem}\label{THMC} Assume that each $u_i$ is concave and upper-semicontinuous in its first argument, then the \emph{interim} core of $G$ is nonempty. 
\end{theorem}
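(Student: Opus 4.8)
The plan is to reduce the non-emptiness of the \emph{interim} core to the non-emptiness of the ($\varepsilon$-relaxed) weak-core of an auxiliary normal form game with finitely many players, and then to invoke (or re-prove) Weber's weak-core existence theorem for that auxiliary game. Using the partition hypothesis \textbf{R1)}, fix the finite partition $\PP$ of $\Omega$ generating $\underset{i\in N}{\vee}\FF_i$, and for each $i$ the partition $\PP_i$ generating $\FF_i$. The key reformulation is that a strategy $x_i\in\AA_i$ (an $\FF$-measurable map $\Omega\to A_i$) is evaluated by player $i$ only through the conditional expectation $E(U_i(x)\mid\FF_i)$, which is constant on each atom $K_i\in\PP_i$. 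So I would first observe that blocking on a common-knowledge event $F\in\wedge_{i\in S}\FF_i$ decomposes atom-by-atom: $S$ blocks $x$ on $F$ iff $S$ blocks $x$ on some atom of $\wedge_{i\in S}\FF_i$ contained in $F$. This lets me work with a fixed finite collection of "negotiation atoms".

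\medskip

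Next I would set up the auxiliary game. The player set is $N$; the strategy set of player $i$ is $\AA_i$, which (as noted in the excerpt, citing \citet{DRS93,ULG91}) is weakly compact in $L_1(\Omega,\mu,X_i)$ and, by convexity of $A_i$, convex. For the payoff, the difficulty is that player $i$ cares about a whole vector of conditional expectations (one value per atom $K_i\in\PP_i$), not a single scalar; so I would pass to a game where player $i$'s "utility" is genuinely vector-valued, or equivalently aggregate it into a single scalar $h_i(x)=\sum_{K_i\in\PP_i}\lambda_{K_i}\,E(U_i(x)\mid\FF_i)_{K_i}$ with strictly positive weights $\lambda_{K_i}$ — but this does not capture the \emph{coordinatewise} strict inequality in Definition \ref{ICOA}. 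The cleaner route is to keep the multi-criteria structure and prove directly an $\alpha$-core / weak-core existence statement for the game in which each player $i$ is "split" into copies indexed by atoms $K_i\in\PP_i$, so that the resulting finite game $\Gamma$ has player set $\{(i,K_i): i\in N, K_i\in\PP_i\}$, the copy $(i,K_i)$ has strategy set $\AA_i$ with a consistency constraint tying together copies of the same $i$, and payoff $E(U_i(\cdot)\mid\FF_i)_{K_i}$. Blocking coalitions in $G$ correspond to coalitions in $\Gamma$ of the form $\{(i,K_i): i\in S,\ K_i\subset F\}$ for $S\subset N$ and $F$ a union of atoms of $\wedge_{i\in S}\FF_i$; since these are only \emph{some} of the coalitions of $\Gamma$, a weak-core element of $\Gamma$ is a fortiori in the \emph{interim} core of $G$.

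\medskip

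I would then verify the hypotheses needed for the weak-core existence theorem on $\Gamma$: the strategy spaces are nonempty, convex, weakly compact subsets of reflexive Banach spaces; and each payoff $x\mapsto E(U_i(x)\mid\FF_i)_{K_i}=\frac{1}{\mu(K_i)}\int_{K_i}u_i(x(w),w)\,d\mu(w)$ is weakly upper-semicontinuous and concave on $\AA$, using concavity and upper-semicontinuity of $u_i$ in its first argument together with the integrable bound $\sup_{a}|u_i(a,\cdot)|\in L_1$ (concavity passes to the integral directly; weak u.s.c. follows because a concave u.s.c. integrand gives, via Fatou together with the fact that weak $L_1$-convergence on a convex weakly compact set can be upgraded by Mazur's lemma to a.e.-convergent convex combinations, a weakly u.s.c. integral functional). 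With these in hand, Weber's theorem — or the standard Scarf/Kajii-type argument using a Fan–Glicksberg fixed point or a Knaster–Kuratowski–Mazurkiewicz argument on the balanced cover, with the $\varepsilon$ in the definition being exactly what makes the weak-core (as opposed to the $\alpha$-core) robust to the passage to the limit — yields a non-blocked strategy profile in $\Gamma$, hence a point of the \emph{interim} core of $G$.

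\medskip

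The main obstacle I expect is the infinite-dimensionality: establishing that the integral functionals $x\mapsto\int_{K}u_i(x(w),w)\,d\mu$ are \emph{weakly} upper-semicontinuous on $\AA$ (not merely u.s.c. for the norm topology), since weak convergence in $L_1$ does not give pointwise convergence. The standard fix is to combine convexity with Mazur's lemma (a weakly convergent sequence has convex combinations converging strongly, hence a subsequence converging a.e.) and then apply Fatou's lemma with the integrable dominating bound; one must check this interacts correctly with taking the "inf over $y_{-S}$" in the $\alpha$-core construction, which is where concavity (guaranteeing the relevant sets of attainable payoff vectors are convex) is essential. A secondary point requiring care is the bookkeeping that translates blocking-on-a-common-knowledge-event in $G$ precisely into coalition-blocking in the split game $\Gamma$, including the reduction to atoms and the harmless replacement of "a.e. $w\in F$" by "for the (constant) value on each atom".
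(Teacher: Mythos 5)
Your plan follows essentially the same route as the paper's proof: split each player $i$ into copies $(i,K)$, $K\in\PP_i$, each controlling the restriction of $x_i$ to its atom; restrict attention to the coalitions of the split game induced by pairs $(S,F)$ with $F$ a common-knowledge event, so that a weak-core element of the auxiliary game maps back into the \emph{interim} core; establish weak upper semicontinuity of the integral payoffs via a.e.-convergent convex combinations (Mazur) plus Fatou with the integrable bound; and conclude by Scarf's balanced-core existence theorem, with the $\varepsilon$ of the weak-core absorbing the passage from the derived characteristic-function game back to the strategic game. The only step you leave as a black box --- verifying balancedness of the derived NTU game in the presence of externalities, which requires Scarf's rewriting of $y=\sum_{S\ni j}\delta_S y^S$ as a convex combination of profiles of the form $(y^S,h^S_{-S})$ so that the $\alpha$-guarantees against arbitrary counter-strategies can be invoked --- is precisely the point you flag as ``interacting with the inf over $y_{-S}$'' and is where the paper's proof does most of its work.
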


\begin{proof} Following the idea of \citet{WIL78}, associate to $G$ a new game $G'$ defined as follows~:
the set of players is $(i,K), i\in N, K\in \PP_i$. Denote by $J$ the new set of indices. The set of strategies of $j=(i,K)$ is $Y_j=L_1(\Omega,\mu,A_i).\chi_K$, where $\chi_K$ is the characteristic function of $K$. Note that the sets $Y_j$ are weakly compact. Set $Y=\prod_{j\in J}Y_j$ and define the linear function (isomorphism) $L:Y\rightarrow \AA$ by $x=L(y)$, $x_i=\sum_{K\in \PP_i} y_{(i,K)}$. Associate to each player $j=(i,K)$ the utility $g_j$ defined for $y\in Y$ by $$g_j(y)=\int_{K} u_i(L(y)(w),w) d\mu=\mu(K) E(U_i(L(y))|\FF_i)_v, \forall v\in K$$

The set of admissible coalitions consists of subsets $S\subset J$ associated to pairs $(S_0,F)$ consisting of a coalition $S_0\subset N$ of the initial game and a common knowledge event $F\in \underset{k\in S_0}{\wedge} \FF_k$ as $(S_0,F)=\{(i,K):i\in S_0;K\in\PP_i\text{ and }K\subset F\}$. The {weak}-core of $G'$ is the set of strategies $y$ that are not blocked by an admissible coalition. It is clear that if $y$ is in the {weak}-core of $G'$, then $L(y)$ is in the \emph{interim} core of $G$. Note that in Definition \ref{ICOA}, the functions $E(U_i(y_S,y_{-S})|\FF_i)$ and $E(U_i(x)|\FF_i)$, $i\in S$, depend only on the restrictions of $(y_S,y_{-S})$ and $x$ to $F$, \emph{i.e.}, on $(y_S,y_{-S})_{|F}$ and $x_{|F}$.

We now check the routine regularity assumptions needed for the components of $G'$. Let us prove that for every $j=(i,K)\in J$, $g_j$ is upper semi-continuous and concave on $Y$. The concavity of $g_j$ results trivially from the concavity of $u_i$ in its first argument. Let us prove that $g_j$ is upper semi-continuous on $Y$.  Following the Eberlein-$\check{\text{S}}$mulian theorem, as we are working in a weakly compact space (note that a finite product of weakly compact sets is weakly compact), we can use sequences instead of nets to describe the closure of subsets and, consequently, to also describe the upper semi-continuity of real functions. Let $y^n$ converge weakly to $y$. Without loss of generality, consider a subsequence still denoted $y^n$ such that $\lim g_j(y^n)=\lim\sup g_j(y^n)$. The existence of the sup-bound of $g_j$ is ensured because $\sup_a|u_i(a,\cdot)|$ is integrably bounded.

Following \cite{DRS93}, there is a sequence $h_n\in co\{y^k:k\geq n\}$ such that $h^n$ converges a.e. to $y$, and hence $L(h^n)$ also converges a.e. to $L(y)$. Then, from the upper semicontinuity of $u_i$, for a.e. $w\in \Omega$,  $\lim\sup\; u_i(L(h^n)(w),w)\leq u_i(L(y)(w),w)$. Hence, by Fatou's Lemma~:  $$\lim\sup \;g_j(h^n)\leq \int_{K} \lim\sup\; u_i(L(h^n)(w),w)d\mu\leq \int_{K} u_i(L(y)(w),w)d\mu$$

From the concavity of $u_i$, for every $n\in \IN$, there is $k\geq n$ such that $g_j(h^n)\geq g_j(y^k)$. It follows that $$\lim \sup g_j(y^n)\leq g_j(y)$$

This proves the upper semi-continuity of the functions $g_j$.

Associate the characteristic function form game $G_C=(V,J)$, where the value function $V$ is defined by~:
$$
V(S)=\left\{
\begin{array}{l}
  v\in \IR^{J}:\text{ there is }y_S\in Y_S\text{ such that for all } y_{-S}\in Y_{-S}\\
   g_j(y_S,y_{-S})\geq v_j,\text{ for every } j\in S
\end{array}
\right\}
$$

on every coalition $S$ that is admissible for $G'$. For the other types of coalitions, simply let $V(S)=\IR_-^S\times \IR^{J\backslash S}$. 

It is obvious that to an element $v$ of the core of $G_C$ corresponds an element in the { weak-core} of $G'$. It is precisely an element $y\in Y$ such that the grand coalition {(admissible since it is associated to $(N,\Omega)$)} can ensure at least the gains $v_j$ at $y$ for its members, where $v$ is in the core of $G_C$. { Indeed, let $v$ in the core of $G_C$ and $y\in Y$, such that $g_j(y)\geq v_j$ for all $j\in J$. Hence, if an admissible coalition $S$ blocks $y$ for the blocking concept of definition \ref{WC}, then there exists $\varepsilon>0$ and $x_S\in Y_S$ such that $g_j(x_S,x_{-S})>g_j(y)+\varepsilon\geq v_j+\varepsilon$, for all $j\in S$ and all $x_{-S}\in Y_{-S}$.  By construction of $V(S)$, $v$ belongs to the interior of $V(S)$ and then, this is a contradiction with the fact that $v$ is in the core of $G_C$. Note that if $g_j$ does not depend on $y_{-j}$, for every $j\in J$, (this is the case of exchange economies in absence of externalities) then the previous argument works for the $\alpha-$core blocking concept, i.e., without the need of the $\varepsilon>0$.} 

For non-vacuity, Let us apply the result of \cite{SCA67} that guarantee the existence of an element in the core of $G_C$ under the following conditions~:
\begin{itemize}
\item[1)] $V(S)$ is closed and nonempty for every $S\subset J$ and $V(J)$ is bounded from above,
\item[2)] $V(S)$ is comprehensive for every $S\subset J$, \emph{i.e.}, if $v\in V(S)$ and $v'\leq v$ (coordinatewise), then $v'\in V(S)$,
\item[3)] $G_C$ is balanced, that is, for every balanced collection of coalitions $\SS$ with positive balancing weights $\delta_S,S\in \SS$,  one has $$ \cap_{S\in \SS}V(S)\subset V(J)$$
where the collection of coalitions $\SS$, with associated positive (balancing) weights $\delta_S,S\in \SS$, is said to be balanced precisely when $$\underset{S\in \SS,S\ni j}{\sum} \delta_S=1,\forall j\in J$$
\end{itemize}
The verification of 1) and 2) is easy. Indeed, 1) follows from the positivity, upper semi-continuity of the functions $g_j$ and the compactness of the strategy sets, and 2) is trivial from the definition of the sets $V(S)$. For 3), we apply an adapted Scarf trick \citep{SCA71}. Take a balanced collection $\SS$ with associated positive balancing weights $\delta_S,S\in \SS$, and $v\in \cap_{S\in \SS}V(S)$. For every $S\in\SS$, let $y^S\in Y_S$ such that $g_j(y^S,y^{-S})\geq v_j$ for every $j\in S$ and $y^{-S}\in Y_{-S}$. If $S$ is non-admissible, then $y^S$ can be taken arbitrarily because of the positivity of the functions $u_i$. 
Let the element $y=(y_1,...,y_{|J|})$ defined by $y_j=\underset{S\in \SS,S\ni j}{\sum} \delta_S y^S_j$. Then, $y\in Y$ by convexity and for any fixed $j\in J$, $y$ can be expressed as

$$y=\underset{S\in \CC,S\ni j}{\sum}\alpha_S(h_1^S,...,h_{|J|}^S) $$

where $h^S_m\in Y_m$ is defined by

$$h^S_m(w)=\left\{\begin{array}{ll}
        y^S_m(w), &\text{ if }m\in S, \\
        $\; $\\
       \displaystyle  \frac{\underset{C\ni m,C\not \ni j}{\sum}\delta_C y^C_m (w)   }{\underset{C\ni m,C\not \ni j}{\sum}\delta_C}, & \text{if }m\notin S.
      \end{array}\right.
 $$
where the last summations (and all of the following) are made, if not stated otherwise, over the coalitions $C$ (or $S$) belonging to $\SS$.

Indeed, for every $m$,
$$\underset{S\in \CC,S\ni j}{\sum}\delta_S  h_m^S=\underset{S\ni j,S\ni m}{\sum}\delta_S y_m^S+\underset{S\ni j,S\not \ni m}{\sum}\delta_S \frac{\underset{C\ni m,C\not \ni j}{\sum}\delta_C y^C_m}{\underset{C\ni m,C\not \ni j}{\sum}\delta_C}.$$
To conclude that the previous quantity yields $y_m$, it suffices to
remark that $$\underset{S\ni j,S\not \ni
m}{\sum}\delta_S=\underset{C\ni m,C\not \ni j}{\sum}\delta_C,$$
which is a consequence of the balancedness of the  collection of
coalitions~:
$$1=\underset{S\ni j}{\sum}\delta_S=\underset{S\ni j,S\not \ni m}{\sum}\delta_S+\underset{S\ni m,S\ni j}{\sum}\delta_S=\underset{C\ni m,C\not \ni j}{\sum}\delta_C+\underset{S\ni m,S\ni j}{\sum}\delta_S=\underset{C\ni m}{\sum}\delta_C=1.$$
Now with the help of the concavity of $g_j$, we obtain
$$
\begin{array}{rl}
g_j(y_1,...,y_{|J|})&\displaystyle=g_j(\underset{S\ni j}{\sum}\delta_S(h^S_1,...,h^S_{|J|})\\
                                     &\displaystyle=g_j(\underset{S\ni j}{\sum}\delta_S(y^S,h^S_{-S}))\\
                                     &\displaystyle\geq v_j
\end{array}
 $$

This establishes the balancedness of $G_C$, and consequently the non-vacuity of its core, and then that of the weak-core of $G'$. Take any $y$ in the weak-core of $G'$. Obviously, $L(y)$ is in the \emph{interim} core of the initial game.

\end{proof}

\subsection{\emph{Interim} core for exchange economies}\label{ICEE}

In this section, we apply the above \emph{interim} core concept for exchange economies. The difference with respect to normal form games is that we do not require externalities in player payoffs and the strategy sets are constrained. 

\bigskip
Most of the hypotheses considered above remain unchanged. We address only those new hypotheses related to exchange economies and, if not expressly stated, we use the previous conditions. 

For exchange economies, define the consumption set of player $i$ by $C_i=\IR_+^l$ for all $i\in N$, where $l$ is a positive integer. For every $i\in N$, let $e_i:\Omega \rightarrow C_i$ be the initial endowment of agent $i$. Assume that $e_i$ is $\FF$-measurable. Note that this condition implies that players do not precisely know their initial endowments at the \emph{interim} negotiation step. However, this information will be available at \emph{ex post} contract delivery.   

In contrast with the previous paragraph, allocations are subject to constraints. The set of admissible allocations $\CC_S$ of a coalition $S\subseteq N$ is an $|S|$-tuple of $\FF$-measurable functions $x_i:\Omega\rightarrow C_i,i\in S,$ such that $\sum_{i\in S} x_i(w)=\sum_{i\in S} e_i(w)$ for a.e $w\in \Omega$.  

Assume that
\begin{itemize}
\item the initial endowments $e_i$ belong respectively to $L_1(\Omega,\mu,C_i)$, \emph{i.e.}, they are $\mu$-integrable.
\end{itemize}

Denote by $P_i(\CC)$ the projection of $\CC=\CC_N$ on $L_1(\Omega,\mu,C_i)$ and note that $P_i(\CC)$ contains all of the projections of the sets $\CC_S$ on $L_1(\Omega,\mu,C_i)$. The $\mu$-integrability of the initial endowments makes the sets $\CC_S$ and the sets $P_i(\CC)$ weakly compact as we show in the following~:

\begin{lemma}\label{COMP} Under the condition of the integrability of $e_i,i\in N,$ the spaces $\CC_S$, $S\subset N$, and the projections $P_i(\CC)$ of $\CC$ on $L_1(\Omega,\mu,C_i)$ are weakly compact.
\end{lemma}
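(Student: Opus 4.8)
The plan is to establish weak compactness of each $\CC_S$ and each $P_i(\CC)$ by exhibiting them as weakly closed subsets of a weakly compact set, using the Dunford--Pettis criterion to produce the ambient weakly compact set from the integrability of the endowments. First I would fix a coalition $S$ and observe that for any admissible allocation $x\in\CC_S$ one has, coordinatewise, $0\le x_i(w)\le \sum_{j\in S}e_j(w)=:g_S(w)$ for a.e.\ $w$, because the $x_i$ take values in $C_i=\IR_+^l$ and their sum equals $\sum_{j\in S}e_j$. Since each $e_j\in L_1(\Omega,\mu,C_j)$, the function $g_S$ is in $L_1(\Omega,\mu,\IR_+^l)$, so the order interval $\{h\in L_1(\Omega,\mu,\IR^l): 0\le h\le g_S \text{ a.e.}\}$ is uniformly integrable (it is dominated by the single integrable function $g_S$), hence relatively weakly compact by the Dunford--Pettis theorem; it is also convex and norm-closed, hence weakly closed, so it is weakly compact. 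Therefore $\CC_S$ sits inside the product $\prod_{i\in S}[0,g_S]$ of such order intervals, which is weakly compact in $\prod_{i\in S}L_1(\Omega,\mu,X_i)$ with the product of the weak topologies.

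Next I would check that $\CC_S$ is weakly closed in that product. The set is defined by the single affine constraint $\sum_{i\in S}x_i=\sum_{i\in S}e_i$ a.e.\ together with the pointwise nonnegativity constraints $x_i(w)\in\IR_+^l$ a.e. Both types of constraint define convex, norm-closed subsets of $\prod_{i\in S}L_1$: the equality constraint because the map $(x_i)_{i\in S}\mapsto \sum_{i\in S}x_i$ is continuous from the $L_1$-product to $L_1$ and $\{\sum_{i\in S}e_i\}$ is closed, and the nonnegativity constraints because $\{x_i\in L_1: x_i\ge 0 \text{ a.e.}\}$ is a norm-closed convex cone. A convex norm-closed set is weakly closed, so $\CC_S$ is a weakly closed subset of the weakly compact set $\prod_{i\in S}[0,g_S]$, hence weakly compact. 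For the projections, I would note that $P_i(\CC)$ is the image of the weakly compact set $\CC=\CC_N$ under the $i$-th coordinate projection $\prod_{j\in N}L_1(\Omega,\mu,X_j)\to L_1(\Omega,\mu,X_i)$, which is linear and weak-to-weak continuous; the continuous image of a compact set is compact, so $P_i(\CC)$ is weakly compact. (The parenthetical remark in the text that $P_i(\CC)$ contains the projections of all $\CC_S$ is then immediate and not needed for the proof, though one could equally argue directly that each such projection is weakly compact as the continuous image of the weakly compact $\CC_S$.)

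The only genuinely delicate point is the uniform integrability step: one must be slightly careful that the order interval $[0,g_S]$ in $L_1(\Omega,\mu,\IR^l)$ is weakly compact, which for vector-valued $L_1$ requires both uniform integrability of the norms (supplied by the domination $\|h(w)\|\le \|g_S(w)\|$ componentwise, with $g_S$ integrable) and, since the target $\IR^l$ is finite-dimensional, no additional tightness-in-the-range condition beyond what domination already gives. Because $\IR^l$ has the Radon--Nikodym property and is reflexive, the Dunford--Pettis-type criterion for relative weak compactness in $L_1(\Omega,\mu,\IR^l)$ reduces exactly to uniform integrability, so the domination argument is enough; I would cite the same references used earlier in the paper (\citealt{DRS93,ULG91}) for this. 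Everything else — weak closedness from convexity plus norm-closedness, and preservation of weak compactness under continuous linear images — is routine, so the proof is short once the order-interval observation is in place.
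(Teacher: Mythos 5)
Your proposal is correct and follows essentially the same route as the paper: pointwise domination of each coordinate by the integrable function $\sum_{j\in S}e_j$, uniform integrability plus a Dunford--Pettis-type criterion for Bochner $L_1$ (the paper cites Corollary~9 of \citealt{ULG91}, which packages exactly the combination of a.e.\ values in a compact set and uniform integrability), weak closedness of $\CC_S$, and continuity of the coordinate projections for $P_i(\CC)$. Your added justification that weak closedness follows from convexity plus norm-closedness merely fills in a step the paper calls obvious.
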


\begin{proof}
For a.e. $w\in \Omega$, for all $S\subseteq N$, all $x\in \CC_S$ satisfies $x_i(w)\in\{y\in \IR_+^l:0\leq y\leq \underset{j\in S}{\sum}e_j(w)\}$, and this last set is compact in $\IR^l$. Furthermore, 
$$\begin{array}{rl}
\int_E \|x\|_{|S|l}d\mu&= \underset{i\in S}{\sum} \int_E \|x_i\|_l d\mu\\
&\leq \underset{i\in S}{\sum} \int_E \|\underset{j\in N}{\sum}e_j\|_l d\mu\\ 
&\leq |S| \underset{j\in N}{\sum} \int_E \|e_j\|_l d\mu \overset{\mu(E)\rightarrow 0}{\longrightarrow}0
\end{array}
$$

where $\|.\|_{|S|l}$ stands for the sum of the Euclidean norms $\|.\|_{l}$ on the factor spaces $\IR^l$. That is, $\CC_S$ is pointwise bounded and uniformly integrable. As it is obviously weakly closed, it follows from Corollary 9 in \citet{ULG91} that it is weakly compact. The weak compactness of the sets $P_i(\CC)$ results from the continuity of the corresponding projections.
\end{proof}

For exchange economies, externalities are not considered, and the utility of player $i$ thus depends only on $x_i$ and $w$. To clarify a difference, denote it as $$\psi_i :C_i\times\Omega\rightarrow \IR_+$$
Assume that

\begin{itemize}
\item for every $i$, $\psi_i$ is $\BB(C_i)\otimes \FF$-measurable and $\sup_{a_i\in C_i} |\psi_i(a_i,\cdot)|$ is integrably bounded.
\end{itemize}

The other changes are analogous; thus another utility $\Psi_i$, defined on $L_1(\Omega,\mu,C_i)$ by $\Psi_i(x_i)_w=\psi_i(x_i(w),w)$, is associated to each player. In the absence of information sharing, the conditional expectation of player $i$ is defined by $E(\Psi_i(x_i)|\FF_i)$. 

Denote the described exchange economy by $$\EE=((\Omega,\FF,\mu),N,C_i,\psi_i,\FF_i,e_i)$$


{
\begin{definition}\label{CEXC} A coalition $S$ blocks a given strategy $x\in \CC$, if there exists a common-knowledge event $F\in \underset{i\in S}{\wedge}\FF_i$, and if $S$ possesses an admissible strategy $y^S\in \CC_S$, such that,
$$ E(\Psi_i(y^S_i)|\FF_i)_w>E(\Psi_i(x_i)|\FF_i)_w,\text{ for a.e. } w\in F, \forall i\in S$$
In such a circumstance, we also posit that $S$ blocks $x$ on $F$.
The \emph{interim} core of $\EE$ is the set of non-blocked strategies.
\end{definition}
}

\begin{theorem}\label{THMC_EC} Assume that each $\psi_i$ is concave and upper-semicontinuous in its first argument, then the \emph{interim} core of $\EE$ is nonempty. 
\end{theorem}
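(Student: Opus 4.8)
The plan is to mimic the Wilson-style reduction of Theorem~\ref{THMC}, the only genuine changes being that allocations are now coupled by feasibility and that, in the absence of externalities, each auxiliary utility will depend only on its own player's strategy. First I would pass to the auxiliary economy $\EE'$ whose players are the pairs $j=(i,K)$, $i\in N$, $K\in\PP_i$ (call $J$ this index set), with strategy set $Y_{(i,K)}=P_i(\CC)\,\chi_K=\{x_i\chi_K:x_i\in P_i(\CC)\}$; this is weakly compact, being the image of the weakly compact set $P_i(\CC)$ (Lemma~\ref{COMP}) under the weakly continuous map $f\mapsto f\chi_K$ on $L_1$. With $Y=\prod_{j\in J}Y_j$, define the linear map $L:Y\to\prod_{i\in N}L_1(\Omega,\mu,C_i)$ by $L(y)_i=\sum_{K\in\PP_i}y_{(i,K)}$; since $\PP_i$ partitions $\Omega$, $L$ sends every $y$ whose coordinates sum to $\sum_{i\in N}e_i$ into $\CC$, and every $x\in\CC$ equals $L(y)$ for $y_{(i,K)}=x_i\chi_K$. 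Set $g_{(i,K)}(y)=\int_K\psi_i(L(y)_i(w),w)\,d\mu=\mu(K)\,E(\Psi_i(L(y)_i)|\FF_i)_v$, $v\in K$; because there are no externalities and the atoms of $\PP_i$ are disjoint, $g_{(i,K)}$ depends only on $y_{(i,K)}$, and on $K$ one has $L(y)_i=y_{(i,K)}$.

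Next I would call $T\subset J$ admissible when $T=(S_0,F):=\{(i,K):i\in S_0,\ K\in\PP_i,\ K\subset F\}$ for some $S_0\subset N$ and some common-knowledge event $F\in\underset{k\in S_0}{\wedge}\FF_k$, and call $(y_{(i,K)})_{(i,K)\in T}$ admissible for such a $T$ when $\sum_{(i,K)\in T}y_{(i,K)}=\chi_F\sum_{i\in S_0}e_i$; the endowment profile $y_{(i,K)}=e_i\chi_K$ always qualifies. The single identity to keep in mind is that $F$, being $\FF_i$-measurable for every $i\in S_0$, is a union of $\PP_i$-atoms, so $\chi_F=\sum_{K\in\PP_i,\,K\subset F}\chi_K$. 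Exactly as in Theorem~\ref{THMC}, an admissible $y^S\in\CC_{S_0}$ restricted to $F$ is, via $y_{(i,K)}:=y^S_i\chi_K$, an admissible profile of $T$, and since the conditional utilities in Definition~\ref{CEXC} on $F$ depend only on the restriction of the allocations to $F$, a point in the core of $\EE'$ (non-blockedness by admissible coalitions, no $\varepsilon$ needed here for want of externalities, cf.\ the remark in the proof of Theorem~\ref{THMC}) maps under $L$ to a point of the interim core of $\EE$. I would then form the characteristic-function-form game $\EE_C=(V,J)$: on admissible $T$, $V(T)$ is the comprehensive set of $v$ for which some admissible $y_T\in Y_T$ gives $g_j(y_T)\geq v_j$ for all $j\in T$; on all other $T$, $V(T)=\IR_-^T\times\IR^{J\setminus T}$. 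Upper semicontinuity and concavity of each $g_{(i,K)}$ follow verbatim from Theorem~\ref{THMC} (Eberlein--\v{S}mulian to work with sequences, a \citet{DRS93}-type a.e.-convergent sequence of convex combinations of a weakly convergent one, Fatou, then concavity to carry the bound back to the original sequence), so $V(T)$ is closed and nonempty and $V(J)$ is bounded above since $\sup_{a_i}|\psi_i(a_i,\cdot)|$ is integrably bounded; comprehensiveness is immediate.

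The one step that is not a transcription of Theorem~\ref{THMC} --- and the main obstacle --- is the balancedness of $\EE_C$, where the feasibility constraint, vacuous in the normal-form case, must be carried through the Scarf averaging. Given a balanced family $\SS$ with weights $\delta_S$ and $v\in\bigcap_{S\in\SS}V(S)$, I would choose for each admissible $S=(S_0,F)\in\SS$ an admissible profile $y^S$ with $g_j(y^S)\geq v_j$ for $j\in S$, and for each non-admissible $S\in\SS$ take $y^S_{(i,K)}:=e_i\chi_K$ on the members of $S$ (legitimate because $v_j\leq 0$ there and $\psi_i\geq 0$, so $g_j(y^S)\geq 0\geq v_j$). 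Put $y_{(i,K)}:=\sum_{S\in\SS,\,S\ni(i,K)}\delta_S\,y^S_{(i,K)}$. Then the feasibility check $\sum_{(i,K)\in J}y_{(i,K)}=\sum_{i\in N}e_i$ reduces, upon substituting $\sum_{(i,K)\in S}y^S_{(i,K)}=\chi_F\sum_{i\in S_0}e_i$ for admissible $S=(S_0,F)$ (and the plain endowment sum for non-admissible $S$) together with $\chi_F=\sum_{K\in\PP_i,\,K\subset F}\chi_K$, to the balancing identity $\sum_{S\ni(i,K)}\delta_S=1$; hence $L(y)\in\CC$ and $y$ is admissible for the grand coalition $(N,\Omega)=J$. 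Because the coefficients of this convex combination sum to $1$, concavity of $\psi_i$ gives $\psi_i(y_{(i,K)}(w),w)\geq\sum_{S\ni(i,K)}\delta_S\,\psi_i(y^S_{(i,K)}(w),w)$ a.e.\ on $K$; integrating over $K$ and using the absence of externalities yields $g_{(i,K)}(y)\geq\sum_{S\ni(i,K)}\delta_S\,g_{(i,K)}(y^S)\geq v_{(i,K)}$, i.e.\ $v\in V(J)$. Thus $\EE_C$ is balanced, \citet{SCA67} produces a point $v$ in its core, and any $y\in Y$ with $L(y)\in\CC$ and $g_j(y)\geq v_j$ for all $j$ lies in the core of $\EE'$, so $L(y)$ is an element of the interim core of $\EE$. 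Everything beyond the feasibility bookkeeping and the use of Lemma~\ref{COMP} is a transcription of the proof of Theorem~\ref{THMC}.
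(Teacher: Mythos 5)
Your proof is correct and follows essentially the same route as the paper: the Wilson-style splitting into players $(i,K)$, admissible coalitions indexed by pairs $(S_0,F)$, Scarf's balancedness theorem, and the observation that the absence of externalities lets a core point of the auxiliary characteristic-function game pass directly (without the $\varepsilon$) to the \emph{interim} core of $\EE$. Your explicit choice of the endowment profile $y^S_{(i,K)}=e_i\chi_K$ for non-admissible coalitions in the balancedness step is in fact slightly more careful than the paper's ``taken arbitrarily,'' since feasibility of those profiles is what makes the grand-coalition feasibility identity go through.
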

The proof is similar to that of Theorem \ref{THMC}, except for the following~: the initial endowments of the players $j=(i,K)$ in the new game (exchange economy), which we can denote here $\EE'$, defined at the beginning, are defined by $e_j=e_i\chi_K$. Set $Y_j=L_1(\Omega,\mu,C_i)\chi_K$. The set of admissible allocations for a coalition $S$ in $\EE'$ is defined by $\CC_S'=\{y^S\in \underset{j\in S}{\prod}Y_j : \underset{j\in S}{\sum} y^S_j=\underset{j\in S}{\sum}e_j \text{ a.e.}\}$. The weak compactness of these sets is shown in the previous Lemma. The upper semi-continuity of the functions $g_j$ may be shown analogously on the weak compact sets $P_j(\CC')$.   

The admissibility of the combination $y$ defined by $y_j=\underset{S\ni j}{\sum}\delta_S{y^S_j}$ (needed for the balancedness of the associated characteristic form game $G_C$) must be checked for the grand coalition as follows~:
observe first that for a.e. $w\in \Omega$, $\underset{j\in J}{\sum} e_j(w)=\underset{i\in N}{\sum} e_i(w). $ Then a.e. 
$$\underset{j\in J}{\sum} y_j =\underset{j\in J}{\sum} \;\underset{S\in \SS,S\ni j}{\sum}\delta_S y^S_j=\underset{S\in \SS}{\sum} \delta_S\underset{j\in S}{\sum} y^S_j=\underset{S\in \SS}{\sum} \delta_S \underset{j\in S}{\sum} e_j=\underset{j\in J}{\sum} \;\underset{S\in \SS,S\ni j}{\sum}\delta_S e_j=\underset{j\in J}{\sum}e_j$$

This formula establishes that $y$ is admissible. The fact that $g_j(y_j)\geq v_j$, for every $j$, results from the concavity of the functions $g_j$. The balancedness of $G_C$ results immediately. Indeed, thanks to the absence of externalities, we do not need the second calculation presenting, for $j$, $y=\underset{S\ni j}{\sum}\delta_S{h^S}$. {Thanks to the absence of externalities an element of the core of $G_C$ provides directly an element in the $\alpha-$core of $\EE'$ (there is no need to use the $\varepsilon$-approximation, as mentioned in the previous proof) and then an element in the interim core of $\EE$.}
For further details on this proof, please refer to the detailed proof of Theorem \ref{THMEED} in the next section.

\section{\emph{Interim} contract delivery}\label{ICD}

In this section, we consider the situation in which the realized state is not known prior to consumption. Our aim is to establish the results of the previous section in this setting. 

\vspace{12pt}
The game is organized as follows:
at a given date, players receive their own private information that makes them able to recognize whether any set, only in their own field, contains the prevailing state. After this date, players negotiate, and they can form coalitions at this step. After the negotiation step, but still in the \emph{interim} stage, contract delivery takes place. The true state is not revealed before consumption. Therefore, assume the

\begin{itemize}
\item Honest reporting hypothesis.
\end{itemize}

This hypothesis is required for contract delivery. It means that agents do not misreport their realized events to improve their utilities at contract delivery. As the forthcoming solution concept is close to the private core concept, let us underline that some versions of the private core are known to be incentive compatible (\citet{KoY93,HaY01}; see also \citet{HaY97} on this subject). Let us also underline that some new cores that have been recently defined \citep{CPY11} using ambiguity theory ingredients are also incentive compatible.  In this paper, we have adopted an objective common belief governing the states of nature. Then, (conditional) expected utilities seem more appropriate. 
Even if the above \emph{interim} core concept may be formulated successfully in the setting with ambiguity aversion, the techniques as elaborated thus far \citep{CPY11, AnK_DOI} cannot contribute, in our view, to the study of incentive problems in our case. Indeed, the tractability of the \emph{interim} focusing of the concept may be difficult even with these new tools. 
Aside from the existence of many incentive compatibility concepts, this problem may be difficult to address and may involve restrictive conditions (monotonicity and continuity). For now, assume that there is a well-informed authority that may guarantee the applicability of the negotiated contracts.

\vspace{12pt}
Assume all the hypotheses of subsection \ref{CANFG}, except C1) and C2). We instead assume the opposite of C1), and below we describe strategy sets as a substitution for C2).

\vspace{12pt}
The set of strategies $\MM_S$ of a coalition $S$ is an $|S|$-tuple of functions $x_i:\Omega\rightarrow A_i,i\in S,$ such that $x_i$ is $\FF_i$-measurable for each $i\in S$. The set of strategies of the grand coalition $\MM_N$ is occasionally denoted simply by $\MM$. For each coalition $S\subseteq N$, $\MM_S$ is considered to be a subset of $\underset{i\in S}{\prod} L_1(\Omega,\mu,A_i)$ endowed with the $L_1$ norm topology. Note that condition R1) on the existence of finite partitions generating player information fields together with the measurability of player strategies with respect to their sub-$\sigma$-fields render the sets of strategies of finite dimensions. Then, the weak and norm topologies coincide on these sets and make them obviously compact. Further recall that the measurability condition assumed on strategies means that~: without any other information regarding the realized state, player $i$ plays a constant function in each $K\in\PP_i$, which is a meaningful condition because the consumption occurs at the \emph{interim} stage. In fact, player $i$ knows only the smallest event $K\in \PP_i$ containing the prevailing state of the world.

Define the game just described in which consumption is \emph{interim} by $$G_I=((\Omega,\FF,\mu),N,A_i,u_i,\FF_i)$$

The blocking concept and the \emph{interim} core are defined similarly by replacing in definition \ref{ICOA}~: $\AA$ with $\MM$, $\AA_S$ with $\MM_S$, $\AA_{-S}$ with $\MM_{-S}$ and $G$ with $G_I$.

\begin{theorem}\label{THMC_M} Assume that each $u_i$ is concave and upper-semicontinuous in its first argument, then the \emph{interim} core of $G_I$ is nonempty. 
\end{theorem}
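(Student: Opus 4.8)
The plan is to mimic the proof of Theorem \ref{THMC}, duplicating players into pairs $j=(i,K)$ with $i\in N$ and $K\in\PP_i$, and reducing the problem to the non-emptiness of the core of an associated characteristic-function form game $G_C$ via Scarf's theorem. The only structural change is that the strategy sets must now incorporate the measurability constraint imposed by \textbf{C2)}'s replacement: each $x_i$ is $\FF_i$-measurable, i.e. constant on every $K\in\PP_i$. Concretely, I would set $Z_j$ to be the set of $\FF_i$-measurable functions $\Omega\to A_i$ supported on $K$ — equivalently, $Z_j\cong A_i$ via the constant value taken on $K$ — put $Z=\prod_{j\in J}Z_j$, and let $L:Z\to\MM$ be the isomorphism $x_i=\sum_{K\in\PP_i}y_{(i,K)}$. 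Admissible coalitions are again the pairs $(S_0,F)$ with $S_0\subset N$ and $F\in\wedge_{k\in S_0}\FF_k$, realized as $\{(i,K):i\in S_0,\ K\in\PP_i,\ K\subset F\}$, and the utilities $g_j(y)=\int_K u_i(L(y)(w),w)\,d\mu=\mu(K)\,E(U_i(L(y))\mid\FF_i)_v$ for $v\in K$. As in the excerpt, an element of the weak-core of $G_I'$ pushes forward under $L$ to an element of the \emph{interim} core of $G_I$.

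Next I would verify the regularity hypotheses for Scarf's theorem (conditions 1)–3) as listed after the proof of Theorem \ref{THMC}). Here the key simplification over Theorem \ref{THMC} is that, by condition \textbf{R1)} and the measurability constraint, each $Z_j$ is finite-dimensional (isomorphic to the compact convex set $A_i\subset X_i$), so each $Z_j$ is compact in the norm topology and the weak/norm topologies coincide — this is stated explicitly in the excerpt just before the definition of $G_I$. Consequently the upper semicontinuity of $g_j$ is now immediate: $L$ is continuous, $u_i$ is upper-semicontinuous in its first argument and integrably bounded uniformly in $a$, so by Fatou's lemma $y\mapsto\int_K u_i(L(y)(w),w)\,d\mu$ is upper-semicontinuous on the finite-dimensional compact set $Z$ — no Eberlein–\v{S}mulian, no Day's/DRS extraction of a.e.-convergent convex combinations is needed. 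Concavity of $g_j$ is inherited directly from concavity of $u_i$ in its first argument. Thus $V(S)$ is closed, nonempty, comprehensive for all $S$, and $V(J)$ is bounded above; conditions 1) and 2) hold as before.

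For condition 3), balancedness, I would repeat verbatim the adapted Scarf trick from the proof of Theorem \ref{THMC}: given a balanced collection $\SS$ with weights $\delta_S$ and $v\in\bigcap_{S\in\SS}V(S)$, pick $y^S\in Z_S$ securing $v_j$ for $j\in S$ (arbitrary for non-admissible $S$ by positivity of the $u_i$), form $y_j=\sum_{S\ni j}\delta_S y^S_j$, which lies in $Z$ by convexity of each $A_i$, and rewrite $y$ for each fixed $j$ as $\sum_{S\in\SS,S\ni j}\delta_S(h^S_1,\dots,h^S_{|J|})$ with the same auxiliary functions $h^S_m$; the balancedness identity $\sum_{S\ni j,S\not\ni m}\delta_S=\sum_{C\ni m,C\not\ni j}\delta_C$ and concavity of $g_j$ then give $g_j(y)\ge v_j$, so $v\in V(J)$. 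The measurability constraint is respected at every step because a convex combination of $\FF_i$-measurable functions is $\FF_i$-measurable, and $h^S_m$ is built as such a convex combination on the relevant index sets. This establishes balancedness of $G_C$; Scarf's theorem yields a point in its core, hence a point $y$ in the weak-core of $G_I'$, and $L(y)$ is in the \emph{interim} core of $G_I$.

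I do not expect a serious obstacle: the finite-dimensionality brought by \textbf{R1)} plus measurability makes the topological part strictly easier than in Theorem \ref{THMC}. The one point needing a line of care is that $L$ maps $Z$ onto $\MM$ \emph{and} onto the admissible strategy sets $\MM_S$ correctly — i.e. that the image of $Z_S$ under $L$ is exactly $\MM_S$ and that blocking in $G_I$ on a common-knowledge event $F$ corresponds precisely to blocking of $G_I'$ by the admissible coalition $(S_0,F)$ — which follows, as noted in the proof of Theorem \ref{THMC}, from the fact that the conditional expectations $E(U_i(\cdot)\mid\FF_i)$ restricted to $F$ depend only on the restrictions of strategies to $F$, together with the fact that $F$ is a union of atoms $K\in\PP_i$ for each $i\in S_0$.
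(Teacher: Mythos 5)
Your proposal is correct and follows exactly the route the paper intends: the paper omits this proof, stating only that it is the same as that of Theorem \ref{THMC} with $Y_j=A_i\chi_K$, which is precisely your choice of $Z_j$ (constant-on-$K$, i.e. $\FF_i$-measurable, strategies). Your additional observations --- that finite-dimensionality from \textbf{R1)} makes the upper semicontinuity argument elementary and that the Scarf balancedness trick carries over verbatim --- are accurate and fill in the details the paper leaves implicit.
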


The proof is omitted because it is similar to that of theorem \ref{THMC}, except that the set of strategies $Y_j$, for $j=(i,K)$, must be set as $Y_j=A_i\chi_K$. 

\subsection{Exchange economies}\label{EE}

We consider here an exchange economy. Thus, consider the hypotheses of subsection \ref{ICEE}. As above, we address in the sequel only the necessary changes related to the measurability assumption.   

\begin{itemize}
	\item  $e_i:\Omega\rightarrow C_i$ is the initial endowment of player $i\in N$, $e_i$ is assumed to be $\FF_i$-measurable.   
\end{itemize}
This condition means that players know their initial endowments at the \emph{interim} stage. 

As in section \ref{ICEE}, no externalities are assumed on payoffs with respect to strategies; thus denote $\psi_i: C_i\times \Omega\rightarrow \IR_+$ as the utility of player $i$, and associate to $i$ the second utility~$\Psi_i$ analogously.

The set of allocations $\TT_S$ of a coalition $S$ is an $|S|$-tuple of functions $x_i:\Omega\rightarrow C_i,i\in S,$ such that $x_i$ is $\FF_i$-measurable for each $i\in S$ and $\sum_{i\in S} x_i(w)=\sum_{i\in S} e_i(w)$ for a.e. $w\in \Omega$. The set of allocations of the grand coalition $\TT_N$ is occasionally denoted simply by $\TT$. For each coalition $S\subseteq N$, $\TT_S$ is considered to be a subset of $\underset{i\in S}{\prod} L_1(\Omega,\mu,C_i)$ endowed with the $L_1$ norm topology.

Denote by $$\EE_I=((\Omega,\FF,\mu),N,C_i,\psi_i,\FF_i,e_i)$$ an exchange economy with incomplete information in which the consumption is organized at the \emph{interim} stage.

{
The blocking concept and the \emph{interim} core for $\EE_I$ are defined similarly by replacing, in the definition \ref{CEXC}, $\CC$ with $\TT$, $\CC_S$ with $\TT_S$, $\EE$ with $\EE_I$. }
\vspace{12pt}
\begin{theorem}\label{THMEED} Assume that each $\psi_i$ is concave and upper-semicontinuous in its first argument, then the \emph{interim} core of $\EE_I$ is nonempty.
\end{theorem}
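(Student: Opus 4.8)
The plan is to follow the Wilson-style reduction already used in Theorem \ref{THMC} and its exchange-economy variant (Theorem \ref{THMC_EC}), adapting it to the fact that strategies are now required to be $\FF_i$-measurable, hence the consumption takes place at the \emph{interim} stage. First I would associate to $\EE_I$ an auxiliary exchange economy $\EE_I'$ whose set of players is $J=\{(i,K):i\in N,\ K\in\PP_i\}$; following the prescription announced after the statement of Theorem \ref{THMC_M}, the strategy set of $j=(i,K)$ is now the \emph{finite-dimensional} set $Y_j=C_i\chi_K$ (constant functions on $K$, killed off $K$), and the endowment of $j=(i,K)$ is $e_j=e_i\chi_K$, which makes sense because $e_i$ is $\FF_i$-measurable, hence constant on each $K\in\PP_i$. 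The linear isomorphism $L:Y=\prod_{j\in J}Y_j\rightarrow\MM$, $x_i=\sum_{K\in\PP_i}y_{(i,K)}$, identifies $\FF_i$-measurable strategies with tuples in $Y$, and carries the feasibility constraint $\sum_{i\in S}x_i=\sum_{i\in S}e_i$ to the constraint defining $\TT_S'=\{y^S\in\prod_{j\in S}Y_j:\sum_{j\in S}y^S_j=\sum_{j\in S}e_j\text{ a.e.}\}$. To each $j=(i,K)$ attach the utility $g_j(y)=\int_K\psi_i(L(y)(w),w)\,d\mu=\mu(K)\,E(\Psi_i(L(y)_i)\,|\,\FF_i)_v$ for $v\in K$; note that, there being no externalities, $g_j$ depends only on $y_j$.

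Next I would record the routine regularity facts. The sets $Y_j$, and hence $\TT_S'$ and the projections $P_j(\TT')$, are weakly compact: they are even finite-dimensional and bounded, by condition R1) and the integrability of the $e_i$ (this is exactly Lemma \ref{COMP} restricted to constant functions on each $K$, or simpler still since everything is finite-dimensional). Concavity of $g_j$ is immediate from concavity of $\psi_i$ in its first argument, and upper semicontinuity of $g_j$ on the weakly compact $P_j(\TT')$ follows from upper semicontinuity of $\psi_i$, the integrable bound on $\sup_{a_i}|\psi_i(a_i,\cdot)|$, and Fatou's lemma — verbatim the argument in the proof of Theorem \ref{THMC}, and in fact trivial here because on a finite-dimensional bounded set weak and norm convergence coincide and one can pass directly along a convergent subsequence. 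As in Theorem \ref{THMC}, observe that if $y$ lies in the ($\alpha$-)core of $\EE_I'$ then $L(y)$ lies in the \emph{interim} core of $\EE_I$: the admissible coalitions $(S_0,F)=\{(i,K):i\in S_0,\ K\subset F\}$ for a common-knowledge event $F\in\wedge_{k\in S_0}\FF_k$ are precisely the blocking coalitions of Definition \ref{CEXC}, and the relevant conditional expectations over $F$ depend only on the restrictions of the allocations to $F$.

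Then I would form the characteristic-function game $G_C=(V,J)$, with $V(S)$ the set of $v\in\IR^J$ attainable by some $y_S\in\TT_S'$ for all $y_{-S}$ (on admissible $S$), and $V(S)=\IR_-^S\times\IR^{J\setminus S}$ otherwise, and apply Scarf's theorem \citep{SCA67}. Conditions 1) and 2) — closedness, nonemptiness, boundedness above of $V(J)$, and comprehensiveness — follow from positivity and upper semicontinuity of the $g_j$, compactness of the $Y_j$, and the definition of $V(S)$, exactly as before. The only point needing care, and the one I expect to be the main obstacle, is balancedness (condition 3): given a balanced family $\SS$ with weights $\delta_S$ and $v\in\bigcap_{S\in\SS}V(S)$ witnessed by allocations $y^S\in\TT_S'$, one must check that the convex combination $y_j=\sum_{S\in\SS,\,S\ni j}\delta_S\,y^S_j$ lies in $\TT_J'$, i.e. is feasible for the grand coalition. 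This is the computation displayed just after Theorem \ref{THMC_EC}: using $\sum_{j\in J}e_j=\sum_{i\in N}e_i$ and interchanging the two finite sums, $\sum_{j\in J}y_j=\sum_{S\in\SS}\delta_S\sum_{j\in S}y^S_j=\sum_{S\in\SS}\delta_S\sum_{j\in S}e_j=\sum_{j\in J}e_j$ a.e., so $y\in\TT_J'$; and $g_j(y_j)=g_j\!\big(\sum_{S\ni j}\delta_S y^S_j\big)\geq\sum_{S\ni j}\delta_S g_j(y^S_j)\geq\sum_{S\ni j}\delta_S v_j=v_j$ by concavity of $g_j$ and balancedness. Hence $v\in V(J)$, $G_C$ is balanced, Scarf's theorem yields a core element $v$ of $G_C$, any $y\in\TT_J'$ realizing $v$ lies in the $\alpha$-core of $\EE_I'$ (no $\varepsilon$ needed, since the $g_j$ have no externalities), and $L(y)$ is the desired element of the \emph{interim} core of $\EE_I$. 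I would finish by noting that the $\FF_i$-measurability of $L(y)_i$ is automatic from $y_j\in C_i\chi_K$, so the allocation produced is genuinely admissible for the \emph{interim}-consumption economy.
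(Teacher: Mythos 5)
Your proposal is correct and follows essentially the same route as the paper's own proof: the same auxiliary economy with player set $J=\{(i,K)\}$, finite-dimensional strategy sets $Y_j=C_i\chi_K$, endowments $e_i\chi_K$, utilities $g_j$, admissible coalitions indexed by pairs $(S_0,F)$, and an application of Scarf's theorem whose balancedness step rests on the same feasibility interchange $\sum_{j\in J}y_j=\sum_{S\in\SS}\delta_S\sum_{j\in S}y^S_j=\sum_{j\in J}e_j$ and the concavity of the $g_j$. The only cosmetic difference is that you retain the ``for all $y_{-S}$'' quantifier in the definition of $V(S)$, which is vacuous here since, as you yourself note, the absence of externalities makes $g_j$ depend only on $y_j$.
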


As the proof of the analogous Theorem \ref{THMC_EC} is not detailed, we choose to provide the integral proof of this theorem for the reader's convenience.

\begin{proof}
Following the idea of \citet{WIL78}, we begin by associating a new game (economy), denoted $\EE'$, defined as follows~:
the set of players is $(i,K), i\in N, K\in \PP_i$. Denote by $J$ the new set of indices. The consumption set of $j=(i,K)$ is $Y_j=\IR^l_+\chi_K$, where $\chi_K$ is the characteristic function of $K$. Set $Y=\prod_{j\in J}Y_j$. Observe that $Y$ is of finite dimension, and consequently, the $L_1$ norm on the set of allocations of players $L_1(\Omega,\mu,C_i)$ induces a norm equivalent to the esssup norm  on the corresponding sets $Y_j$. The initial endowment of $j=(i,K)$ is $e'_j=e_i\chi_K$. The admissible set of allocations for a coalition $S$ is $\TT'_S=\{y_S=(y_j)_{j\in S}:\sum_{j\in S}y_j=\sum_{j\in S} e'_j \text{ a.e. and } y_j \in Y_j\}$.  Clearly, $\TT'_S$ is compact for every $S$; it is in fact of finite dimension, bounded and closed, as an intersection of a closed set with an inverse image of a singleton by a linear map.  Define the linear function $L:Y\rightarrow \prod_{i\in N} L_1(\Omega,\mu,C_i)$ by $x=L(y)$, $x_i=\sum_{K\in \PP_i} y_{(i,K)}$.
Observe that for a.e. $w\in \Omega$, $\underset{j\in J}{\sum} e'_j(w)=\underset{i\in N}{\sum} e_i(w)$ and that $y$ is admissible for the grand coalition $J$ in the new game iff $L(y)$ is admissible for $N$ in the initial game.

Associate to a player $j=(i,K)$ the utility $g_j$ defined for $y_j\in Y_j$ by $$g_j(y_j)=\int_{K} \psi_i(y_j(w),w) d\mu=\mu(K) E(\Psi_i(L(y)_i)|\FF_i)_v, \forall v\in K$$

The set of admissible coalitions consists of subsets $S\subset J$ associated to pairs $(S_0,F)$, where $S_0\subset N$ is a coalition of the economy $\EE_I$ and $F\in \underset{k\in S_0}{\wedge} \FF_k$ is a common knowledge event~: $S=\{(i,K):i\in S_0;K\in \PP_i \text{ and } K\subset F\}$.

Denote by $P_j(\TT')$ the projection of $\TT'=\TT'_N$ on $Y_j$, and note that $P_j(\TT')$ contains all of the projections of the sets $\TT'_S$ on $Y_j$, for every $S\ni j$. Clearly, $P_j(\TT')$ is compact.

Let us show that, for every $j=(i,K)\in J$, $g_j$ is upper semi-continuous and concave on $Y_j$. The concavity of $g_j$ results trivially from the concavity of $\psi_i$ in its first argument. For the upper semi-continuity of $g_j$, let $y_j^n$ converge to $y_j$. Then, $y_j^n$ converges a.e. to $y_j$. From the upper semi-continuity of $\psi_i$, for a.e. $w\in \Omega$,  $\lim\sup\; \psi_i(y^n_j(w),w)\leq \psi_i(y_j(w),w)$. Hence, by Fatou's Lemma  $$\lim\sup \;g_j(y^n_j)\leq \int_{K} \lim\sup\; \psi_i(y^n_j(w),w)d\mu\leq \int_{K} \psi_i(y_j(w),w)d\mu$$
The existence of the upper bound of $g_j$ is ensured by the fact that $\sup_a|\psi_i(a,\cdot)|$ is integrably bounded.
This proves the upper semi-continuity of the functions $g_j$.

Associate the characteristic function form game $G_C=(V,J)$, where the value function $V$ is defined by

$$V(S)=\left\{v\in \IR^{J}:\text{ there is }y^S\in \TT'_S\text{ such that }g_j(y^S_j)\geq v_j, \text{ for every } j\in S\right\}$$

on every admissible coalition $S$ of $\EE'$ (that is associated to a pair $(S_0,F)$, as described above). For the other form of coalitions, simply put $V(S)=\IR^S_-\times \IR^{J\backslash S}$.

It is obvious that to an element $v$ of the core of $G_C$ corresponds an element in the core of $\EE'$ (allocations not blocked by admissible coalitions). It is precisely an element $y\in \TT'$ such that the grand coalition can ensure at least the gains given by the corresponding components of $v$ at $y$ for its members, where $v$ is in the core of $G_C$ {(note that the grand coalition is always admissible, since it is associated to $(N,\Omega)$)}. Let us apply, for non-vacuity, the result of \cite{SCA67} that guarantee an element in the core of $G_C$ under conditions 1)-3) restated in the proof of Theorem \ref{THMC}. 

The verification of 1) and 2) is straightforward. Indeed, 1) follows from the positivity of $\psi_i$'s ($0\in V(S)$, for every $S$), the upper semi-continuity of the functions $g_j$ and the compactness of the admissible sets of allocations. Condition 2) results trivially from the definition of the sets $V(S)$. For 3), take a balanced collection $\SS$ with associated positive balancing weights $\delta_S,S\in \SS$, and $v\in \cap_{S\in \SS}V(S)$. Then, for every $S\in\SS$, let $y^S\in \TT'_S$ such that $g_j(y^S_j)\geq v_j$ for every $j\in S$. If $S$ is non-admissible, then $y^S$ can be taken arbitrarily. Let the element $y=(y_1,...,y_{|J|})$ be defined by $y_j=\underset{S\in \SS,S\ni j}{\sum} \delta_S y^S_j$. Let us check the admissibility of $y$, \emph{i.e.} $y\in \TT'$. We can write a.e.

$$\underset{j\in J}{\sum} y_j =\underset{j\in J}{\sum} \;\underset{S\in \SS,S\ni j}{\sum}\delta_S y^S_j=\underset{S\in \SS}{\sum} \delta_S\underset{j\in S}{\sum} y^S_j=\underset{S\in \SS}{\sum} \delta_S \underset{j\in S}{\sum} e'_j=\underset{j\in J}{\sum} \;\underset{S\in \SS,S\ni j}{\sum}\delta_S e'_j=\underset{j\in J}{\sum}e'_j$$

This formula establishes that $y\in \TT'$. The fact that $g_j(y_j)\geq v_j$, for every $j$, follows from the concavity of the functions $g_j$. The balancedness of $G_C$ results immediately. Consequently, the core of $G_C$ is nonempty and so is the core of $\EE'$. For any $y$ in the core of $\EE'$, $x=L(y)\in \TT$ is clearly in the \emph{interim} core of the initial economy $\EE_I$.
\end{proof}

\subsection{Comparison with \emph{interim} private cores}\label{CPC}
\bigskip
\begin{itemize}

\item[1.]
In some of the literature cited in the introduction, the welfare of the agents is evaluated by the mathematical expectation computed on the entire universe of states of nature even if the concepts addressed are \emph{interim}. In some others, as in the original definition of the private core, the evaluation is effected by means of a conditional expectation (an \emph{interim} utility) measured and considered, at the negotiation step, at every state of nature, including possibly commonly known unrealized events. In other words, to block an outcome, a coalition must improve upon its members' (\emph{interim}) utilities at every state of nature, including states of unrealized events. Ideally, in the \emph{interim} situation, the agents may know that proper events have been realized. Then, conditional expectations seem to be more appropriate and preferably be considered on proper realized events at the negotiation date. Unlike the private core, our concept defined above fulfils these criteria. To highlight this point, consider the situation of an economy $\EE_I$ described in Subsection \ref{EE}.
Let us make a comparison with the (\emph{interim}) private core in its original definition \citep{YAN91}, in which blocking is only possible over the entire $\Omega$. More precisely, $S$ blocks $x\in \TT$ iff there exists an allocation $y^S\in \TT_S$ such that $$E(\Psi(y^S_i)|\FF_i)_w> E(\Psi(x_i)|\FF_i)_w, \text{ for a.e. }w\in \Omega, \forall i\in S$$

Consider any economy as described above in which $\Omega$ is finite and assume that at one state of nature, say $w_0\in \Omega$, that all players discern as a non-null proper event, all the utilities are equal regardless of the manner in which players pool their initial endowments. Assume that negotiations are made at the \emph{interim} stage. If we adopt the definition of \citet{YAN91}\footnote{see also \citet{GMY01}, footnote p. 297, and \citet{AlY01}}, all issues of the game belong to the private core. Indeed, no coalition can improve upon the conditional expectations of its members at the fixed state $w_0$. Clearly, this is inappropriate if players can commonly discern another realized event not containing $w_0$ on which they can improve their utilities. This situation, extreme as it may seem, is solved by the \emph{interim} core defined above. Indeed, players can block allocations on proper events, and hence they may improve their expected utilities at least relative to the common knowledge event $\Omega \setminus \{w_0\}$.

\item[2.] The \emph{interim} private core of \citet{HaY01}, p. 499, and the weak \emph{interim} private core of \citet{CPY11}\footnote{In which it is mentioned that the existence of this issue is an open problem, p. 528.} are different from the concept defined above. To make a more suitable comparison, consider again the economy $\EE_I$, described in Subsection \ref{EE}.
To block an allocation, a coalition only has to improve its members' \emph{interim} (expected) utilities at a unique state of nature. Consider a game with three players $N=\{1,2,3\}$. Each player's information is given by a finest partition of $\Omega=\{a,b\}$, $\PP_1=\{\{a,b\}\}$,  $\PP_2=\{\{a\},\{b\}\}$ and $\PP_3=\PP_1$, respectively. Set $C_i=\IR_+$ and $e_i\equiv 1$, for all $i\in N$. Let the prior probability $\mu$ on $\Omega$ be the equiprobability $\mu(a)=\mu(b)$. The three utilities are given as follows~: $\psi_1(x_1,w)=x_1$, $\psi_3(x_3,w)=x_3$ for all $w\in \Omega, x_1,x_3\in \IR_+$ and $\psi_2(x_2,w)=x_2$ if $w=a$, and $\psi_2(x_2,w)=1-x_2$ otherwise. Let us show that the (weak) \emph{interim} private core of \citet{HaY01} and \citet{CPY11} is empty. For these core concepts, a coalition $S$ blocks an allocation $x$ if it possesses an admissible allocation $y\in \TT_S$, such that for a given state of nature $w_0\in \Omega$, one has $$ E(\Psi_i(y_i)|\FF_i)_{w_0}>E(\Psi_i(x_i)|\FF_i)_{w_0},\forall i\in S.$$
In such a circumstance, let us say in the sequel that $S$ blocks $x$ at $w_0$.
    
Given any allocation $x\in \TT$,
    $$
    \begin{array}{rl}
      E(\Psi_i(x_i)|\FF_i)_w= & x_i(w), \text{ if } i\in\{1,3\}, \text{ for all } w\in \Omega.\\
      E(\Psi_2(x_2)|\FF_2)_w= & \left\{\begin{array}{l}
                                    x_2(w) \text{, if } w=a,\\
                                    1-x_2(w) \text{, if } w=b.
                                    \end{array}
                                \right.
    \end{array}
    $$

Coalition \{i\}, for $i=1$ or $3$, blocks any allocation $x$, where $x_i(w)<1, w\in\{a,b\}$. Then, we analyze only allocations $x$, where $x_1 (w)\geq 1$ and $x_3 (w)\geq 1$ and assume that only $e_2$ can be shared among the three players to maximize their utilities. Such allocations can be written as $x=(1+\alpha_1,\alpha_2,1+\alpha_3)$, where $\alpha_1,\alpha_2$ and $\alpha_3$ are non-negative and satisfy $\sum_i\alpha_i\equiv 1$ and are freely determined by any coalition containing player 2.
Let $x=(1+\alpha_1,\alpha_2,1+\alpha_3)$ be an allocation satisfying the requirements detailed above. Note that the measurability condition implies that $x_i,i\in\{1,3\}$ is constant on $\{a,b\}$. We have only two possible cases~:  

1) for $i_0\in\{1,3\}$, $\alpha_{i_0}(a)>0$. Without loss of generality, assume that $i_0=1$. Then, the coalition $\{2,3\}$ blocks $x$ at $a$. A blocking allocation may be $y\in \TT_{\{2,3\}}$ given by $y_2=x_2+\alpha_1/2$ and $y_3=x_3+\alpha_1/2$. 

2) $x_1\equiv x_3\equiv 1$ then, $x_2\equiv 1$. In this case, $E(\Psi_2(x_2)|\FF_2)_b=0$ and for instance $\{1,2\}$ blocks $x$ at $b$ by playing the allocation $y\in \TT_{\{1,2\}}$ given by

$y_1\equiv x_1+1/2$ and $y_2\equiv 1/2$.       

This proves that the weak \emph{interim} private core is empty.  It is easy to check that the allocation $x=(1,1,1)$ belongs to the \emph{interim} core introduced in this paper. Note that the weak \emph{interim} private core discussed in this example is shown to be nonempty when stated under ambiguity aversion and complete preferences \citep{CPY11}. In particular, the measurability of player strategies is relaxed in this study, which, like the ambiguity setting, is incompatible with our \emph{interim} aspect and renders the defined core impossible to compare with our concept.       

\end{itemize}

\section{Comment on information sharing}\label{ISH}

One can elaborate, in our context, a core concept in which information can be shared. Let us illustrate this by providing details using the normal form game defined in section \ref{EPCD}. If information sharing is permitted, then we find it convenient that players evaluate their payoffs by means of the conditional expectation with respect to all known information at the negotiation date. Furthermore, it is natural to consider that players update their expected utilities relative to the information they possess at the negotiation step. However, the (direct) intuitive formulation of such a concept, which can be stated as follows, is problematic~:

\begin{definition}\label{IFIN1} A coalition $S$ blocks a given strategy $x\in \AA$, if there exists an event $F\in \underset{j\in S}{\vee}\FF_j$, {there exists $\varepsilon>0$}, and if $S$ has an admissible strategy $y_S\in \AA_S$, such that, for all $y_{-S}\in \AA_{-S}$
$$ E(U_i(y_S,y_{-S})|\underset{j\in S}{\vee}\FF_j)_w>E(U_i(x)|\underset{j\in S}{\vee}\FF_j)_w+\varepsilon,\text{ for a.e. } w\in F, \forall i\in S$$

The \emph{interim} fine core of $G$ is the set of non-blocked strategies.
\end{definition}

Indeed, if player $i$ can access the information carried by $\underset{i\in N}\vee \FF_i$ by joining the grand coalition, it is not meaningful to say that he participates in a proper coalition $S$ and evaluates his welfare by $E(U(.)|\underset{i\in S}\vee \FF_i)$. Indeed, he is assumed to discern the finer events contained in $\underset{i\in N}\vee \FF_i$, and then the mean of his payoffs on events of $\underset{i\in S}\vee \FF_i$ cannot matter to him. 

If we assume that each player intends to share his information with all or some of the other players, then the previous \emph{interim} core can be generalized to encompass this situation. Indeed, it suffices to set $H_i$ as the finest information that player $i$ may possess at the negotiation step. Then the \emph{interim} core, defined above, may be applied to this situation, by replacing $\FF_i$ with $H_i$ in definition \ref{ICOA}. For instance, a fine core concept can be defined as follows~:

\begin{definition}\label{IFIN2} A coalition $S$ blocks a given strategy $x\in \AA$, if there exists an event $F\in \underset{i\in S}{\wedge}H_i$, {there exists $\varepsilon>0$}, and if $S$ possesses an admissible strategy $y_S\in \AA_S$, such that, for all $y_{-S}\in \AA_{-S}$
$$ E(U_i(y_S,y_{-S})|H_i)_w>E(U_i(x)|H_i)_w+\varepsilon,\text{ for a.e. } w\in F, \forall i\in S$$

The \emph{interim} fine core of $G$ is the set of non-blocked strategies.
\end{definition}

By setting $H_i=\underset{j\in N}{\vee}\FF_j$, for all $i\in N$, we obtain a particular \emph{interim} core in which all the information is shared. Then, these \emph{interim} fine cores are nonempty following the theorems shown above. This definition seems to allow the formation of coalitions that can use the information of the remaining players to object to a status-quo allocation, which is an undesirable effect. However, because the defined concepts are nonempty, this can instead be seen as a stability property. That is, even if coalitions are tempted to behave in such a way, they cannot form by improving their members' payoffs.   

\section*{}
\section*{Acknowledgement}
The author is very grateful to an anonymous referee for his reading and suggestions that significantly improved the quality of this paper. 

\bibliographystyle{plainnat}
\bibliography{Interim_biblio}

\begin{thebibliography}{28}
\providecommand{\natexlab}[1]{#1}
\providecommand{\url}[1]{\texttt{#1}}
\expandafter\ifx\csname urlstyle\endcsname\relax
  \providecommand{\doi}[1]{doi: #1}\else
  \providecommand{\doi}{doi: \begingroup \urlstyle{rm}\Url}\fi

\bibitem[Allen(1996)]{ALL96}
B~Allen.
\newblock \emph{Cooperative theory with incomplete information}.
\newblock Staff Report 225, Federal Reserve Bank of Minneapolis, 1996.

\bibitem[Allen(2006)]{ALL06}
B~Allen.
\newblock Market games with asymmetric information~: the core.
\newblock \emph{Economic Theory}, 29:\penalty0 465--487, 2006.

\bibitem[Allen and Yannelis(2001)]{AlY01}
B~Allen and NC~Yannelis.
\newblock Differential information economies~: introduction.
\newblock \emph{Economic Theory}, 18:\penalty0 263--273, 2001.

\bibitem[Angelopoulos and Koutsougeras(2015)]{AnK_DOI}
A~Angelopoulos and LC~Koutsougeras.
\newblock Value allocation under ambiguity.
\newblock \emph{Economic Theory}, 59:\penalty0 147--167, 2015.

\bibitem[Askoura et~al.(2013)Askoura, Sbihi, and Tikobaini]{AST13}
Y~Askoura, M~Sbihi, and H~Tikobaini.
\newblock The ex ante $\alpha$-core for normal form games with uncertainty.
\newblock \emph{Journal of Mathematical Economics}, 49:\penalty0 157--162,
  2013.

\bibitem[Aumann(1961)]{AUM61}
RJ~Aumann.
\newblock The core of a cooperative game without side payments.
\newblock \emph{Transactions of the American Mathematical Society},
  98:\penalty0 539--552, 1961.

\bibitem[De~Castro et~al.(2011)De~Castro, Pesce, and Yannelis]{CPY11}
LI~De~Castro, M~Pesce, and NC~Yannelis.
\newblock Core and equilibria under ambiguity.
\newblock \emph{Economic Theory}, 48:\penalty0 519--548, 2011.

\bibitem[Diestel et~al.(1993)Diestel, Ruess, and Schachermayer]{DRS93}
J~Diestel, WM~Ruess, and W~Schachermayer.
\newblock Weak compactness in $l_1(\mu,x)$.
\newblock \emph{Proceedings of the American Mathematical Society},
  118:\penalty0 447--453, 1993.

\bibitem[Forges et~al.(2002)Forges, Minelli, and Vohra]{FMV02}
F~Forges, E~Minelli, and R~Vohra.
\newblock Incentives and the cores of an exchange economy~: A survey.
\newblock \emph{Journal of Mathematical Economics}, 38:\penalty0 1--41, 2002.

\bibitem[Glycopantis et~al.(2001)Glycopantis, Muir, and Yannelis]{GMY01}
D~Glycopantis, A~Muir, and NC~Yannelis.
\newblock An extensive form interpretation of the private core.
\newblock \emph{Economic Theory}, 18:\penalty0 293--319, 2001.

\bibitem[Hahn and Yannelis(1997)]{HaY97}
G~Hahn and NC~Yannelis.
\newblock Efficiency and incentive compatibility in differential information
  economies.
\newblock \emph{Economic Theory}, 10:\penalty0 383--411, 1997.

\bibitem[Hahn and Yannelis(2001)]{HaY01}
G~Hahn and NC~Yannelis.
\newblock Coalitional bayesian nash implementation in differential information
  economies.
\newblock \emph{Economic Theory}, 18:\penalty0 485--509, 2001.

\bibitem[Harsanyi(1967)]{HAR67}
JC~Harsanyi.
\newblock Games with incomplete information played by ``bayesian'' players
  (i-iii).
\newblock \emph{Management Science 1967; 14; 159-182. Management Science 1967;
  14; 320-334. Management Science 1968; 14; 486-502}, 1967.

\bibitem[Kajii(1992)]{KAJ92}
A~Kajii.
\newblock A generalization of scarf's theorem : an $\alpha$-core existence
  theorem without transitivity or completeness.
\newblock \emph{Journal of Economic Theory}, 56:\penalty0 194--205, 1992.

\bibitem[Kobayashi(1980)]{KOB80}
T~Kobayashi.
\newblock Equilibrium contracts for syndicates with differential information.
\newblock \emph{Econometrica}, 48:\penalty0 1635--1665, 1980.

\bibitem[Koutsougeras and Yannelis(1993)]{KoY93}
LC~Koutsougeras and NC~Yannelis.
\newblock Incentive compatibility and information superiority of the core of an
  economy with differential information.
\newblock \emph{Economic Theory}, 3:\penalty0 195--216, 1993.

\bibitem[Noguchi(2014)]{MUT14}
M~Noguchi.
\newblock Cooperative equilibria of finite games with incomplete information.
\newblock \emph{Journal of mathematical economics}, 55:\penalty0 4--10, 2014.

\bibitem[Okada(2012)]{AKI12}
A~Okada.
\newblock Non-cooperative bargaining and the incomplete informational core.
\newblock \emph{Journal of Economic Theory}, 147:\penalty0 1165--1190, 2012.

\bibitem[Page(1997)]{PAG97}
FH. Page.
\newblock Market games with differential information and infinite dimensional
  commodity spaces~: the core.
\newblock \emph{Economic Theory}, 9:\penalty0 151--159, 1997.

\bibitem[Radner(1968)]{RAD68}
R~Radner.
\newblock Competitive equilibrium under uncertainty.
\newblock \emph{Econontetrica}, 36:\penalty0 31--58, 1968.

\bibitem[Scarf(1967)]{SCA67}
HE~Scarf.
\newblock The core of an n-person game.
\newblock \emph{Econometrica}, 35:\penalty0 50--69, 1967.

\bibitem[Scarf(1971)]{SCA71}
HE~Scarf.
\newblock On the existence of a cooperative solution for a general class of
  n-person games.
\newblock \emph{Journal of Economic Theory}, 3:\penalty0 169--181, 1971.

\bibitem[Serfes(2001)]{SER01}
K~Serfes.
\newblock Non-myopic learning in differential information economics~: the core.
\newblock \emph{Economic Theory}, 18:\penalty0 333--348, 2001.

\bibitem[\"Ulger(1991)]{ULG91}
A~\"Ulger.
\newblock Weak compactness in $l^1(\mu,x)$.
\newblock \emph{Proceedings of the American Mathematical Society},
  113:\penalty0 143--149, 1991.

\bibitem[Volij(2000)]{VOL00}
O~Volij.
\newblock Communication, credible improvements and the core of an economy with
  asymmetric information.
\newblock \emph{International Journal of Game Theory}, 29:\penalty0 63--79,
  2000.

\bibitem[Weber(1981)]{WEB81}
S~Weber.
\newblock Some results on the weak-core of a non-side-payment game with
  infinitely many players.
\newblock \emph{Journal of Mathematical Economics}, 8:\penalty0 101--111, 1981.

\bibitem[Wilson(1978)]{WIL78}
R~Wilson.
\newblock Information, efficiency and the core of an economy.
\newblock \emph{Econometrica}, 46:\penalty0 807--816, 1978.

\bibitem[Yannelis(1991)]{YAN91}
NC~Yannelis.
\newblock The core of an economy with differential information.
\newblock \emph{Economic Theory}, 1:\penalty0 183--198, 1991.

\end{thebibliography}

\end{document}